\newtheorem{theorem}{Theorem}
\newtheorem{remark}{Remark}
\title{On the Deterministic Sum-Capacity of the Multiple Access Wiretap Channel}
\author{
\IEEEauthorblockN{Rick Fritschek}
\IEEEauthorblockA{Heisenberg Communications and Information Theory Group\\
    Freie Universit\"at Berlin, \\
    Takustr. 9,
    D--14195 Berlin, Germany\\
    Email: rick.fritschek@fu-berlin.de
}%
\and
\IEEEauthorblockN{Gerhard Wunder}
\IEEEauthorblockA{Heisenberg Communications and Information Theory Group\\
	Freie Universit\"at Berlin, \\
    Takustr. 9,
    D--14195 Berlin, Germany\\
Email: wunder@zedat.fu-berlin.de}

}
\begin{document}

\maketitle
\begin{abstract}
We study a deterministic approximation of the two-user multiple access wiretap channel. This approximation enables results beyond the recently shown $\tfrac{2}{3}$ secure degrees of freedom (s.d.o.f.) for the Gaussian multiple access channel. While the s.d.o.f. were obtained by real interference alignment, our approach uses signal-scale alignment. We show an achievable scheme which is independent of the rationality of the channel gains. Moreover, our result can differentiate between channel strengths, in particular between both users, and establishes a secrecy rate dependent on this difference. We can show that the resulting achievable secrecy rate tends to the s.d.o.f. for vanishing channel gain differences. Moreover, we extend the s.d.o.f. bound towards a general bound for varying channel strengths and show that our achievable scheme reaches the bound for certain channel gain parameters. We believe that our analysis is the first step towards a constant-gap analysis of the Gaussian multiple access wiretap channel.
\end{abstract}

\section{Introduction}
In this paper we study the secure capacity of an approximation of the Gaussian multiple access wiretap channel. The wiretap channel was first proposed by Wyner in \cite{Wyner75}, and solved in its degraded version. The result was later extended to the general wiretap channel by Csiszar and K\"orner in \cite{CsizarKoerner}. Moreover, the Gaussian equivalent was studied by Leung-Yan-Cheon and Hellman in \cite{Hellman}. The wiretap channel and its modified version served as an archetypical channel for physical-layer security investigations. However, in recent years, the network nature of communication, i.e. support of multiple users, becomes increasingly important. A straight-forward extension of the wiretap channel to multiple users was done in \cite{TekinYenerGMAC-WT}, where the Gaussian multiple access wiretap channel (GMAC-WT) was introduced. A general solution for the secure capacity of this multi-user wiretap set-up was out of reach and investigations focused on the secure degrees of freedom (s.d.of.) of these networks. Degrees of freedom are used to gain insights into the scaling behaviour of multi-user channels. They measure the capacity of the network, normalized by the single-link capacity, as power goes to infinity. This also means that the d.o.f. provide an asymptotic view on the problem at hand. This simplifies the analysis and enables asymptotic solutions of channel models where no finite power capacity results could be found. An example for a technique which yields d.o.f. results is real interference alignment. It uses integer lattice transmit constellations which are scaled such that alignment can be achieved. The intended messages are recovered by minimum-distance decoding and the error probability is bounded by usage of the Khintchine-Groshev theorem of Diophantine approximation theory. The disadvantage of the method is that these results only hold for almost all channel gains. This is unsatisfying for secrecy purposes, since it leaves an infinite amount of cases where the schemes do not work, e.g. rational channel gains. Moreover, secrecy should not depend on the accuracy of channel measurements. Real interference alignment is part of a broader class of interference alignment strategies. Interference alignment (IA) was introduced in \cite{CadambeJafarIA} and \cite{Maddah-AliKhandi-IA}, among others, and its main idea is to design signals such that the caused interference overlaps(aligns) and therefore uses less signal dimensions. The resulting interference-free signal dimensions can be used for communication. IA methods can be divided into two categories, namely the vector-space alignment approach and the signal-scale alignment approach \cite{Niesen-Ali}. 
The former uses the classical signalling dimensions time, frequency and multiple-antennas for the alignment, while the latter uses the signal strength for alignment. Real interference alignment and signal-strength deterministic models are examples for signal-scale alignment. Signal-strength deterministic models are based on an approximation of the Gaussian channel. An example for such an approximation is the linear deterministic model (LDM), introduced by Avestimehr et al. in\cite{Avestimehr2007}. It is based on a binary expansion of the transmit signal, and an approximation of the channel gain to powers of two. The resulting binary expansion gets truncated at the noise level which yields a noise-free binary signal vector and makes the model deterministic. It has been shown that various Gaussian channels (i.e. 
\cite{Bresler2008}, \cite{Bresler2010}, \cite{Suvarup2011}, \cite{FW14b}) can be approximated by the LDM such that the deterministic capacity is within a constant bit-gap of the Gaussian channel. Moreover, layered lattice coding schemes can be used to transfer the achievable scheme to the Gaussian model.

{\bf Previous work and Contributions:} 
Previous results on the  multiple access wiretap channel include the sum secure d.o.f. of $\tfrac{K(K-1)}{K(K-1)+1}$ for the K-user case in \cite{XieUlukusOneHop}. There, the authors used a combination of real interference alignment together with cooperative jamming \cite{TekinYenerCoopJam}. The idea is that the users can allocate a small part of the signalling dimensions with uniform distributed random bits. Those random bits are send such that they occupy a small space at the legitimate receiver, while overlapping with the signals at the eavesdropper. The crypto-lemma shows, that the eavesdropper cannot recover the signal without knowledge of the random bits, hence securing communication. The next step is to transition from the d.o.f. result, to a constant-gap capacity result. 
In \cite{FW16_ISIT} the previously known $\tfrac{1}{2}$ d.o.f. result of the wiretap channel with a helper \cite{XieUlukusOneHop} was extended towards a constant-gap result for certain channel gains. This was achieved by approximating the problem with the LDM and then transferring the results to the Gaussian model. We will use the same strategy for the multiple access wiretap channel, i.e. restrict the problem to the two user case and approximate it by the LDM. We develop a novel communication scheme which achieves the $\tfrac{2}{3}$ d.o.f. result as baseline and extends to a generalized s.d.o.f. characterization. Moreover, we transfer bounding techniques of \cite{XieUlukusOneHop} to the deterministic setting and extend them towards a general bound, which also includes asymmetrical channel gains.

\begin{figure}
\centering
\includegraphics[scale=0.82]{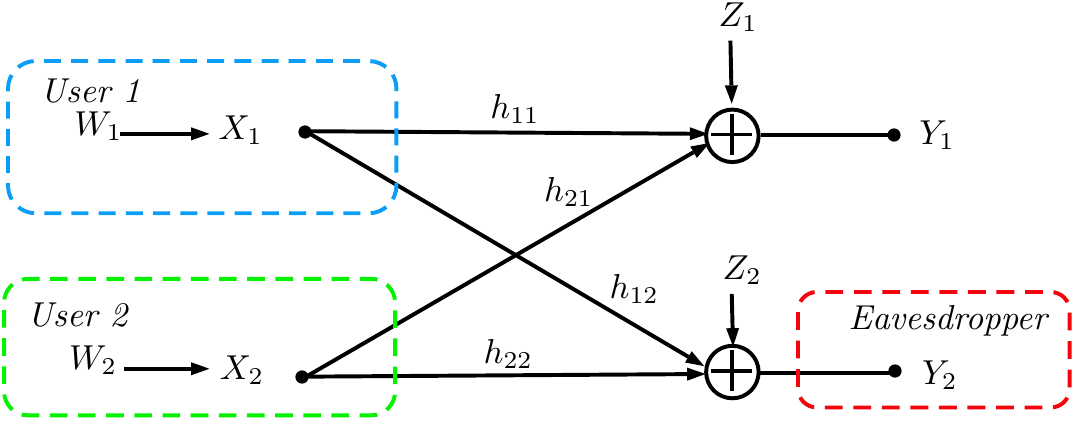}
\caption{The Gaussian multiple access Wiretap channel.}
\end{figure}
\label{System model}
\section{System Model} 

The Gaussian multiple access wiretap channel is defined as a system consisting of two transmitters and two receivers. Each transmitter $i$ has a channel input $X_i$ which it sends over the channel to the legitimate receiver $Y_1$, whereas the eavesdropper $Y_2$ tries to intercept the messages. The channel itself is modelled with additive white Gaussian noise, $Z_1,Z_2$. The system equations can therefore be written as 
\begin{IEEEeqnarray}{rCl}\label{Gauss_Model}
Y_1&=&h_{11}X_1+h_{21}X_2+Z_1\IEEEyessubnumber\\
Y_2&=&h_{22}X_2+h_{12}X_1+Z_2,\IEEEyessubnumber
\end{IEEEeqnarray}
where the channel inputs satisfy an input power constraint $E\{|X_i|^2 \} \leq P$ for each $i$. Moreover, the Gaussian noise terms are assumed to be independent and zero mean with unit variance, $Z_i \sim \mathcal{CN}(0,1)$. A $(2^{nR_1},2^{nR_2},n)$ code for the multiple access wiretap channel will consist of a message pair $(W_1,W_2)$ uniformly distributed over the message set $[1:2^{nR_1}]\times[1:2^{nR_2}]$ with a decoding and two randomized encoding functions. Encoder $1$ assigns a codeword $X_1^n(w_1)$ to each message $w_1\in [1:2^{nR_1}]$, while the encoder $2$ assigns a codeword $X_2^n(w_2)$ to each message $w_2\in [1:2^{nR_2}]$. The decoder assigns  an estimate $(\hat{w_1},\hat{w_2})\in [1:2^{nR_1}]\times [1:2^{nR_2}]$ to each observation of $Y_1^n$. A rate is said to be achievable if there exist a sequence of $(2^{nR_1},2^{nR_2},n)$ codes, for which the probability of error $P^{(n)}_e=P[(\hat{W_1},\hat{W_2})\neq (W_1,W_2)]$  goes to zero, as $n$ goes to infinity $\lim_{n\rightarrow \infty} P^{(n)}_e =0$. A message $W$ is said to be information-theoretically secure if the eavesdropper cannot reconstruct the message $W$ from the channel observation $Y_2^n$. This means that the uncertainty of the message is almost equal to its entropy, given the channel observation. Considering both messages $W_1,W_2$, we have that
\begin{equation}
\tfrac{1}{n} H(W_1,W_2|Y_2^n) \geq \tfrac{1}{n} H(W_1,W_2)-\epsilon,
\label{security_constraint}
\end{equation}
which leads to $I(W_1,W_2;Y_2^n)\leq \epsilon n$ for any $\epsilon>0$. A secrecy rate $r$ is said to be achievable if it is achievable while obeying the secrecy constraint \eqref{security_constraint}.
\subsection{The Linear Deterministic Model}

We investigate the linear deterministic model (LDM) of the multiple access wiretap channel as simplification of the corresponding Gaussian model. This approximation models a signal in the channel as bit-vector $\mathbf{X}$, which is achieved by a binary expansion of the input signal $X$. The elements of the resulting vector are referred to as bit-levels. The addition of signals is modelled by binary addition. Carry-overs are neglected and the addition is limited to the specific bit-level. The resulting bit-vectors get truncating at the noise bit-level, which yields a deterministic model. Moreover, specific channel gains shift the bit-vectors for a certain number of bit-levels with a shift-matrix $\mathbf{S}$, which is defined as
\begin{equation}
\mathbf{S}=\begin{pmatrix}
0 & 0 &  \cdots & 0 & 0\\
1 & 0 &  \cdots & 0 & 0\\
0 & 1 &  \cdots & 0 & 0\\
\vdots & \vdots & \ddots & \vdots & \vdots \\
0 & 0 &  \cdots & 1 & 0\\
\end{pmatrix}.
\end{equation}
The model therefore shifts an incoming bit vector for $q-n$ positions with $\mathbf{Y}=\mathbf{S}^{q-n}\mathbf{X}$, where $q:=\max\{n\}$.
Channel gains are represented by $n_{ij}$-bit levels which corresponds to $\lceil\log \mbox{SNR}\rceil$ of the original channel. With this definitions, the system model can be approximated by
\begin{IEEEeqnarray}{rCl}
\mathbf{Y}_1&=&\mathbf{S}^{q-n_{11}}\mathbf{X}'_1\oplus\mathbf{S}^{q-n_{21}}\mathbf{X}'_2\IEEEyessubnumber\\
\mathbf{Y}_2&=&\mathbf{S}^{q-n_{22}}\mathbf{X}'_2\oplus\mathbf{S}^{q-n_{12}}\mathbf{X}'_1\IEEEyessubnumber,
\label{LDM_Model}
\end{IEEEeqnarray}
where $q:=\max\{n_{11},n_{12},n_{21},n_{22}\}$. For ease of notation, we denote $\mathbf{X}_1=\mathbf{S}^{q-n_{11}}\mathbf{X}'_1$ and $\mathbf{X}_2=\mathbf{S}^{q-n_{21}}\mathbf{X}'_2$.
Furthermore, we denote $\mathbf{S}^{q-n_{22}}\mathbf{X}'_2$ and $\mathbf{S}^{q-n_{12}}\mathbf{X}'_1$ by $\hat{\mathbf{X}}_2$ and $\hat{\mathbf{X}}_1$, respectively. We also assume that $n_{22}=n_{12}=:n_E$, and denote $n_{1}-n_{2}=:n_\Delta$ with $n_{11}=:n_{1}$ and $n_{21}=:n_2$. We may assume w.l.o.g. that $n_1>n_2$, where we leave out the case that $n_1=n_2$, see remark \eqref{remark2}. To specify a particular range of elements in a bit-level vector we use the notation $\mathbf{a}_{[i:j]}$ to indicate that $\mathbf{a}$ is restricted to the bit-levels $i$ to $j$. Bit levels are counted from top, most significant bit in the expansion, to bottom. If $i=1$, it will be omitted $\mathbf{a}_{[:j]}$, the same for $j\!=\!n$ $\mathbf{a}_{[i:]}$.
\begin{remark}
The assumption that $n_{22}=n_{12}=n_E$, i.e. the eavesdropper receives the signals with equal strength, does not influence the achievable secrecy sum-rate. Consider a channel with $n_{22}\neq n_{12}$, for example $n_{22}>n_{12}$. The part of $\mathbf{X}'_2$ which is received above $n_{12}$ at the eavesdropper, $\hat{\mathbf{X}}_{2,[:n_{22}-n_{12}]}$, cannot be utilized since it cannot be jammed. One can therefore achieve the same rate by ignoring the top $n_{22}-n_{12}$ bits of $\mathbf{X}'_2$. The same argument holds for $n_{12}>n_{22}$.
\end{remark}
\section{Main Result}
\subsection{Achievable Secrecy Rate}
\begin{theorem}
The achievable secrecy sum-rate $R_{\text{ach}}$ of the linear deterministic multiple access wiretap channel with symmetric channel gains at the eavesdropper is
\begin{equation}
R_{\text{ach}}=\begin{cases}
\tfrac{2}{3}(\lfloor \tfrac{n_c}{3n_\Delta} \rfloor3n_\Delta)+n_p+Q. & \text{for } n_2\geq n_E
\\
\tfrac{2}{3}(\lfloor \tfrac{n_c}{3n_\Delta} \rfloor3n_\Delta)+Q.  & \text{for } n_E> n_2,\\
\end{cases}
\end{equation}
where $n_c=n_E+n_\Delta$, $n_p=n_1-n_c$ and 
\begin{equation}
Q=\begin{cases}
q & \text{for } n_Q< n_\Delta 
\\
n_\Delta & \text{for } 2n_\Delta > n_Q \geq  n_\Delta\\
n_\Delta + q & \text{for } n_Q \geq 2n_\Delta,
\end{cases}
\end{equation} with $n_Q= n_c- \lfloor \tfrac{n_c}{3n_\Delta}\rfloor3n_\Delta$ and $q=n_Q-\lfloor \tfrac{n_Q}{n_\Delta} \rfloor n_\Delta$.

\end{theorem}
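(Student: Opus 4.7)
The plan is to give a constructive, signal-scale alignment scheme on the LDM and to verify, separately, (a) recoverability of both messages at the legitimate receiver and (b) the secrecy constraint $I(W_1,W_2;\mathbf{Y}_2^n)\le \epsilon n$ at the eavesdropper. The backbone of the construction is the bit-level analogue of the idea used to obtain the $\tfrac{2}{3}$ sd.o.f.\ in \cite{XieUlukusOneHop}: inside a block of $3n_\Delta$ bit-levels, each user places $n_\Delta$ message bits together with $n_\Delta$ random (cooperative jamming) bits, arranged so that the natural $n_\Delta$-offset between the two signals at the legitimate receiver separates the four contributions into a recoverable pattern, while at the eavesdropper, where the two signals are perfectly aligned, the jamming bits of one user cover the message bits of the other via the crypto-lemma.

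First I would tile the ``common'' signal range of length $n_c = n_E + n_\Delta$, where both users' signals overlap at Rx$_1$ and where both are visible at Eve, with $\lfloor n_c/(3n_\Delta)\rfloor$ disjoint super-blocks of length $3n_\Delta$. Inside each super-block, both users transmit $2n_\Delta$ bits split into one $n_\Delta$-slot of messages and one of randomness, staggered so that decoding at Rx$_1$ amounts to inverting a fixed $3n_\Delta\times3n_\Delta$ binary shift pattern, and so that at Eve every message slot of one user is masked bit-for-bit by a random slot of the other. This yields the $\tfrac{2}{3}(\lfloor n_c/(3n_\Delta)\rfloor\,3n_\Delta)$ term.

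Second, I would treat the leftover region of length $n_Q = n_c - \lfloor n_c/(3n_\Delta)\rfloor 3n_\Delta < 3n_\Delta$ by a short case analysis on the number of $n_\Delta$-slots that fit into it, which is exactly the origin of the three cases defining $Q$. If $n_Q<n_\Delta$, no full alignment slot fits and only $q=n_Q$ ``free'' bits survive; if $n_\Delta\le n_Q<2n_\Delta$, one slot fits and produces exactly $n_\Delta$ secure bits; if $n_Q\ge 2n_\Delta$, a full $(n_\Delta,n_\Delta)$ pair of slots fits together with an extra $q$ free bits, giving $n_\Delta+q$. Third, when $n_2\ge n_E$ there is an additional strip of $n_p=n_2-n_E$ bit-levels of user $1$ that overlaps with user $2$'s signal at Rx$_1$ but lies above Eve's reception window. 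In this strip user $1$ places $n_p$ pure message bits: they are invisible to Eve, and at Rx$_1$ they are XORed only with user $2$'s bits that are themselves already recovered from the alignment blocks, hence they add $n_p$ more secure bits. When $n_E>n_2$ the strip is empty and the $n_p$ term drops, which is exactly the two-case structure of the theorem.

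The main obstacle will be the bookkeeping for the remainder region: I have to describe exactly which sub-levels carry messages and which carry jamming in each of the three $Q$-sub-cases and then (i) check linear independence of the received contributions at Rx$_1$, and (ii) verify the secrecy inequality by showing that, conditioned on $\mathbf{Y}_2^n$, the random bits still render the posterior of $(W_1,W_2)$ uniform, so that the sum of the three contributions $\tfrac{2}{3}(\lfloor n_c/(3n_\Delta)\rfloor 3n_\Delta) + n_p\,\mathbf{1}_{\{n_2\ge n_E\}} + Q$ is simultaneously decodable and secure. The $\tfrac{2}{3}$ sd.o.f.\ is recovered in the limit $n_\Delta\downarrow 0$ by letting the super-block count grow, which will serve as a consistency check.
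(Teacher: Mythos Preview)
Your plan is essentially the paper's own construction: tile the common range $n_c=n_E+n_\Delta$ by $3n_\Delta$-super-blocks, place in each block one $n_\Delta$-message slot and one $n_\Delta$-jamming slot per user staggered by the natural $n_\Delta$ offset, handle the leftover $n_Q$ bits by the three-case rule that yields $Q$, and add $n_p$ freely usable private bits when $n_2\ge n_E$. The rate count and the secrecy argument via the crypto-lemma are identical to the paper's.

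One small correction to your private-part bookkeeping. The private strip $\mathbf{Y}_{1,p}=\mathbf{Y}_{1,[n_c+1:n_1]}$ sits \emph{below} Eve's reception window, not above it, and the bits of user~$2$ that land there are $\mathbf{X}'_{2,[n_E+1:n_2]}$, i.e.\ user~$2$'s bottom $n_p$ levels. These are \emph{not} part of the alignment blocks (which act on $\mathbf{X}_{2,c}=\mathbf{X}'_{2,[1:n_E]}$), so they cannot be ``already recovered'' and subtracted. The paper simply has user~$2$ leave those levels empty, so user~$1$'s $n_p$ private message bits arrive interference-free at $\mathbf{Y}_{1,p}$. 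With that adjustment your decodability argument for the private strip is immediate and the rest of your proposal goes through exactly as in the paper.
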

\subsubsection{Case 1 ($n_2\geq n_E)$}
First of all, we introduce the common and private parts of the received signal $\mathbf{Y}_{1}$.  We count the bits from the top (most-significant bit) downwards. The common part will be denoted by $\mathbf{Y}_{1,c}$ and consist of the top $n_c=n_E+n_\Delta$ bits of $\mathbf{Y}_{1}$. The remainder, the private part, will be denoted as $\mathbf{Y}_{1,p}$ and has $n_1-n_c:=n_p$ bits. It only contains signal parts which are not received by the eavesdropper. Our strategy is to deploy a cooperative jamming scheme such that minimal jamming is done to $\mathbf{Y}_{1,c}$, while maximal jamming is received at $\mathbf{Y}_2$. We denote the part of $\mathbf{X}_{1}$ in $\mathbf{Y}_{1,c}$ by $\mathbf{X}_{1,c}$. Moreover, we denote the part of $\mathbf{X}_2$ which gets received at $\mathbf{Y_2}$ by $\mathbf{X}_{2,c}$. We partition these common signals into $3n_\Delta$-bit parts and partition these parts again into $n_\Delta$-bit parts. For $\mathbf{X}_{1,c}$, in every $3n_\Delta$-bit part we use the first $n_\Delta$ bits for the message and the next $n_\Delta$ bits for jamming, while the last $n_\Delta$ bits will not be used. For $\mathbf{X}_{2,c}$, in every $3n_\Delta$-bit part, the first $n_\Delta$ bits will be used for jamming. The next $n_\Delta$ bits will be used for the message and the last $n_\Delta$ bits left free. There will be a reminder part with 
\begin{equation}
n_Q= n_c- \lfloor \tfrac{n_c}{3n_\Delta}\rfloor3n_\Delta\text{ bits}.
\end{equation}
The reminder part follows the same design rules as the $3n_\Delta$ parts, until $n_Q$ bits are allocated. The scheme is designed such that the jamming parts of $\mathbf{X}_{1,c}$ and $\mathbf{X}_{2,c}$ overlap at $\mathbf{Y}_{1,c}$, while the message parts of one signal overlap with the non-used part of the other signal. However, due to the signal strength difference $n_\Delta$, the jamming parts overlap with the messages at $\mathbf{Y}_{2}$, see Fig.~\ref{Design_Scheme_2}. Secure communication is therefore provided by the crypto-lemma, as long as we use a Bern$(\tfrac{1}{2})$ distribution for the jamming bits. The whole private part can be used for messaging and its sum-rate is therefore $r_p=n_p$.
The achievable secure rate for the common part consists of the rate for the $3n_\Delta$ partitions and the reminder part. It can be seen that every $3n_\Delta$-part of $\mathbf{Y}_{1,c}$ allocates $2n_\Delta$ bits for the messages. This results in the common secrecy rate
\begin{equation}
r_c= (\lfloor \tfrac{n_c}{3n_\Delta} \rfloor3n_\Delta)\tfrac{2}{3} +Q,
\end{equation}
where $Q$ specifies the rate part of the remainder term. In the remainder part we allocate all remaining bits as message bits, as long as $n_Q < n_\Delta$. For $2n_\Delta > n_Q \geq  n_\Delta$, we allocate the first $n_\Delta$ bits of $n_Q$ for the message. And for $n_Q \geq 2n_\Delta$, we allocate the first $n_\Delta$ bits as well as the last $q$ bits, where $q$ is defined as
\begin{equation}
q=n_Q-\lfloor \tfrac{n_Q}{n_\Delta} \rfloor n_\Delta.
\end{equation}
This results in
\begin{equation}
Q=\begin{cases}
q & \text{for } n_Q< n_\Delta 
\\
n_\Delta & \text{for } 2n_\Delta > n_Q \geq  n_\Delta\\
n_\Delta + q & \text{for } n_Q \geq 2n_\Delta.
\end{cases}
\end{equation}
Together with the private rate term, we achieve
\begin{equation}
R=\tfrac{2}{3}(\lfloor \tfrac{n_c}{3n_\Delta} \rfloor3n_\Delta)+n_p+Q.
\end{equation}

\begin{figure}
\includegraphics[scale=1]{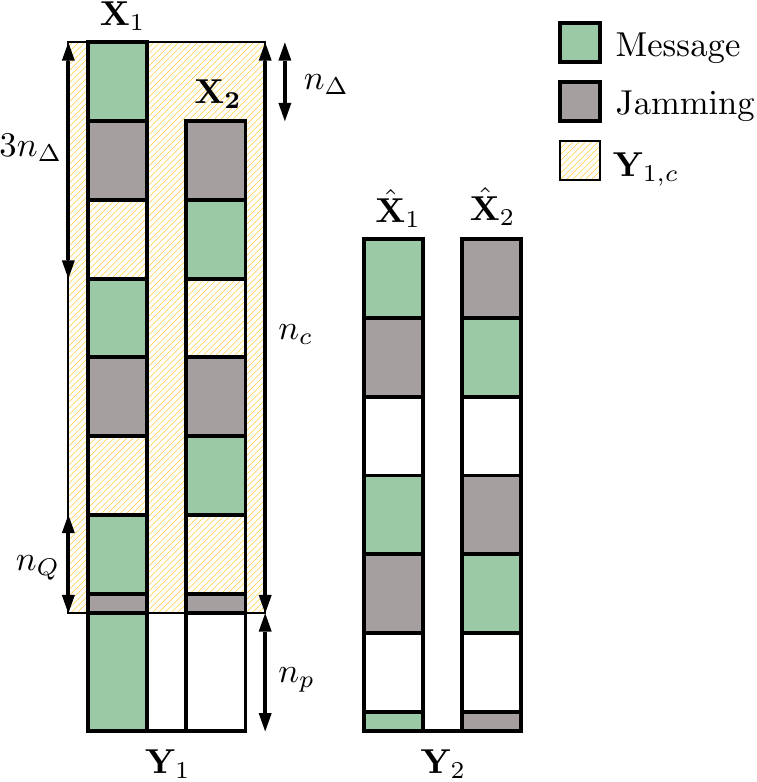}
\caption{Illustration of the achievable scheme. The private part $\mathbf{Y}_{1,p}$ can be used freely and is, in this case, allocated by User 1. The common part $\mathbf{Y}_{1,c}$ uses our alignment strategy. The strategy exploits the channel gain difference between both signals, to minimize the effect of jamming at the receiver $\mathbf{Y}_1$, while jamming all signal parts at the eavesdropper $\mathbf{Y}_2$.}
\label{Design_Scheme_2}
\end{figure}

\subsubsection{Case 2 ($n_2\geq n_E)$}

The achievable scheme is the same as in the previous case, except that we do not have a private part. We therefore have an achievable rate of 
\begin{equation}
R=\tfrac{2}{3}(\lfloor \tfrac{n_c}{3n_\Delta} \rfloor3n_\Delta)+Q.
\end{equation}
%\begin{theorem}
%\label{theorem_det}
%The achievable secrecy rate of the linear deterministic Wiretap channel with a helper is
%\begin{equation}
%r_{\text{ach}}=\begin{cases}
%\tfrac{2}{3}n_c+n_p+\frac{1}{3}(n_1-n_i) & \text{for } n_1>n_2\geq n_i
%\\
% & \text{for } \\
% & \text{for } 
%\end{cases}
%\end{equation}
%where $r^c:=\begin{cases}
%\phi_1(n^c_{Y_1},\Delta) & \text{for } n_{11}>n_{2} \text{ and } n_{11}>n_{21}
%\\
%\phi_2(n^{c}_{Y_1},\Delta) & \text{for everything else},
%\end{cases}$
%\end{theorem}

\begin{remark}The bit level shift between $\mathbf{Y}_1$ and $\mathbf{Y}_2$ of $n_\Delta$ bits makes it impossible to divide $\mathbf{Y}_1$ in exclusively private and common parts. In our division, the bottom $n_\Delta$ bits of $\mathbf{x}_{1,c}$ are only received at $\mathbf{Y}_1$ and therefore private. Hence, the common rate $r_c$ is not purely made of common signal parts. Nevertheless, our choice of division reaches the upper bound and fits into the scheme.
\label{remark1}
\end{remark}
\begin{remark} Our scheme relies on the signal strength difference between both users. Our scheme would not work, if $n_1=n_2$, while having equal channel gains at the eavesdropper. In that case we would not have any signal strength diversity to exploit which results in a singularity point where the secrecy rate is zero.
\label{remark2}
\end{remark}

\subsection{Converse}
\label{det-converse}
\begin{theorem}
The secrecy sum-rate $R_{\text{ach}}$ of the linear deterministic multiple access wiretap channel with symmetric channel gains at the eavesdropper is bounded from above by
\begin{equation}
r_{UB}=\begin{cases}
\tfrac{2}{3}n_c+n_p+\frac{1}{3}n_\Delta & \text{for } n_2\geq n_E
\\
\tfrac{2}{3}n_c+\frac{1}{3}n_\Delta  & \text{for } n_E> n_2.\\
\end{cases}
\end{equation}

\end{theorem}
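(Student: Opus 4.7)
The plan is to adapt the secrecy-penalty and role-of-a-helper bounds of Xie and Ulukus~\cite{XieUlukusOneHop}, originally proved for the Gaussian MAC-WT, to the linear deterministic model, and to combine them with the decomposition of $\mathbf{Y}_1$ into common and private parts already used in the achievability proof. First, I invoke Fano's inequality (the legitimate receiver decodes both messages) together with the secrecy constraint \eqref{security_constraint} to obtain
\[
n(R_1+R_2) \leq I(W_1,W_2;\mathbf{Y}_1^n) - I(W_1,W_2;\mathbf{Y}_2^n) + 2n\epsilon_n.
\]
This reduces the task to upper bounding a difference of mutual informations that can be evaluated directly from the bit-level structure of the LDM.

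Second, I write $\mathbf{Y}_1 = (\mathbf{Y}_{1,c},\mathbf{Y}_{1,p})$, with $\mathbf{Y}_{1,c}$ the top $n_c = n_E + n_\Delta$ levels and, in Case~1, $\mathbf{Y}_{1,p}$ the remaining $n_p = n_2 - n_E$ levels. The private levels depend only on bits of $\mathbf{X}'_1,\mathbf{X}'_2$ that lie below the eavesdropper's noise floor, so a standard MAC argument (no wiretap constraint is active there) bounds their contribution to the sum rate by $n_p$.

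Third, for the common part I transfer the two Xie--Ulukus bounds to the LDM. The secrecy-penalty bound uses $I(W_i;\mathbf{Y}_2^n\mid W_j)\leq n\epsilon_n$ to deduce $nR_i \leq H(\mathbf{Y}_{1,c}^n \mid \mathbf{Y}_2^n,\mathbf{X}_j^n) + 2n\epsilon_n$; in the LDM this entropy equals the number of bit-levels of $\mathbf{X}'_i$ not already determined by $(\mathbf{X}_j,\mathbf{Y}_2)$. The role-of-helper bound expresses $nR_i$ in terms of $H(\mathbf{Y}_{1,c}^n)$ minus a jamming-cost term contributed by the other user. Weighting and summing these two bounds as in~\cite{XieUlukusOneHop}, with coefficients chosen so that $R_i$ appears with total coefficient three on the right-hand side, produces the $\tfrac{2}{3}n_c$ contribution, while the $n_\Delta$-level shift between $\mathbf{Y}_1$ and $\mathbf{Y}_2$ accounts for the extra $\tfrac{1}{3}n_\Delta$. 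Case~2 ($n_E>n_2$) is identical with $\mathbf{Y}_{1,p}$ empty, leaving only $\tfrac{2}{3}n_c + \tfrac{1}{3}n_\Delta$.

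The main obstacle is the third step. The original Xie--Ulukus argument is phrased in terms of differential entropy and relies on noise-insertion tricks that do not transfer verbatim to the discrete deterministic setting. I expect most of the technical work to consist of identifying the correct discrete-entropy substitutes for these tricks (for instance, conditioning on appropriate sub-vectors of $\mathbf{X}'_i$ in place of adding auxiliary Gaussians) and then verifying that the resulting accounting is sharp enough to reproduce the coefficients $\tfrac{2}{3}$ and $\tfrac{1}{3}$ in the LDM rather than looser constants.
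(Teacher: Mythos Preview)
Your high-level plan---Fano plus the secrecy constraint, the common/private split of $\mathbf{Y}_1$, and a weighted combination of a ``secrecy-penalty'' bound with a ``role-of-helper'' bound in the spirit of \cite{XieUlukusOneHop}---matches the paper's strategy. The role-of-helper step and the final summation with total coefficient three are exactly what the paper does.

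There is, however, a genuine gap in your version of the secrecy-penalty step. You claim that from $I(W_i;\mathbf{Y}_2^n\mid W_j)\le n\epsilon_n$ one obtains
\[
nR_i \;\le\; H(\mathbf{Y}_{1,c}^n\mid \mathbf{Y}_2^n,\mathbf{X}_j^n)+2n\epsilon_n.
\]
This inequality is false in general: the secrecy constraint conditions on $W_j$, not on $\mathbf{X}_j^n$, and since the encoders are \emph{randomized}, $\mathbf{X}_j^n$ carries the cooperative-jamming randomness in addition to $W_j$. Conditioning on $\mathbf{X}_j^n$ would hand the jamming symbols to the eavesdropper, and indeed in Case~1 one computes $H(\mathbf{Y}_1^n\mid \mathbf{X}_j^n,\mathbf{Y}_2^n)=H(\mathbf{X}_i^n\mid \hat{\mathbf{X}}_i^n)\le n(n_i-n_E)$, which would yield $R_1+R_2\le n_\Delta+2n_p$. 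For $n_E$ close to $n_2$ this is far below the achievable rate, a contradiction. In short, the noise-insertion trick of \cite{XieUlukusOneHop} does \emph{not} translate into conditioning on the other user's full channel input.

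The paper avoids this by never conditioning the secrecy term on $\mathbf{X}_j^n$. It keeps the \emph{joint} sum-rate inequality $n(R_1+R_2)\le H(\mathbf{Y}_{1,c}^n\mid\mathbf{Y}_2^n)+H(\mathbf{Y}_{1,p}^n\mid\mathbf{Y}_2^n,\mathbf{Y}_{1,c}^n)$ and then upper-bounds $H(\mathbf{Y}_{1,c}^n\mid\mathbf{Y}_2^n)$ in two \emph{asymmetric} ways by direct entropy manipulation: once by $H(\hat{\mathbf{X}}_2^n)+nn_\Delta$ (inserting $(\hat{\mathbf{X}}_1^n,\hat{\mathbf{X}}_2^n)$ and cancelling $H(\hat{\mathbf{X}}_1^n)$ against $H(\mathbf{Y}_2^n\mid\hat{\mathbf{X}}_2^n)$), and once by $H(\mathbf{X}_{1,c}^n)$ (inserting $(\mathbf{X}_{1,c}^n,\mathbf{X}_{2,c}^n)$). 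These individual entropies---not conditional ones---are then traded for rate terms via the role-of-helper inequalities $H(\mathbf{X}_{i,c}^n)\le H(\mathbf{Y}_{1,c}^n)+I(\mathbf{X}_j^n;\mathbf{Y}_{1,p}^n\mid\mathbf{Y}_{1,c}^n)-nR_j$. The discrete substitute you were worried about is precisely this insert-and-cancel manoeuvre on unconditional entropies; no analogue of conditioning on $\mathbf{X}_j^n$ is needed or valid.
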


\begin{proof}We start with some general observations and derivations before handling the different cases explicitly. We begin with the following derivations
\begin{IEEEeqnarray*}{rCl}
n(R_1+R_2)&=& H(W_1,W_2)\\
&=& H(W_1,W_2|\mathbf{Y}_1^n)+I(W_1,W_2;\mathbf{Y}_1^n)\\
&\leq & I(W_1,W_2;\mathbf{Y}_1^n)+ n\epsilon\\
&\overset{(a)}{\leq} & I(W_1,W_2;\mathbf{Y}_1^n)-I(W_1,W_2;\mathbf{Y}_2^n)+n\epsilon_2\\
&\leq & I(W_1,W_2;\mathbf{Y}_1^n,\mathbf{Y}_2^n)-I(W_1,W_2;\mathbf{Y}_2^n)+n\epsilon_2\\
&\leq & I(W_1,W_2;\mathbf{Y}_1^n|\mathbf{Y}_2^n)+n\epsilon_2\\
&\leq & I(\mathbf{X}_1^n,\mathbf{X}_2^n;\mathbf{Y}_1^n|\mathbf{Y}_2^n)+n\epsilon_2\\
&=&H(\mathbf{Y}_1^n|\mathbf{Y}_2^n)
	-H(\mathbf{Y}_1^n|\mathbf{Y}_2^n,\mathbf{X}_1^n,\mathbf{X}_2^n)+n\epsilon_2\\
&\overset{(b)}{=}&H(\mathbf{Y}_1^n|\mathbf{Y}_2^n)+n\epsilon_2\\
&\leq & H(\mathbf{Y}_{1,c}^n|\mathbf{Y}_{2}^n)+H(\mathbf{Y}_{1,p}^n|\mathbf{Y}_{2}^n,\mathbf{Y}_{1,c}^n)+n\epsilon_2\IEEEyesnumber
\label{UB:EQ1}
\end{IEEEeqnarray*} where we used basic techniques such as Fano's inequality and the chain rule. Step $(a)$ introduces the secrecy constraint \eqref{security_constraint}, while we used  the chain rule, non-negativity of mutual information and the data processing inequality in the following lines. Step $(b)$ follows from the fact that $\mathbf{Y}_1^n$ is a function of $(\mathbf{X}_1^n,\mathbf{X}_2^n)$. Note that due to the definition of the common and the private part\footnote{The common part is defined as $\mathbf{Y}_{1,c}^n=\mathbf{Y}_{1,[:n_c]}^n$, and the private part as $\mathbf{Y}_{1,p}^n=\mathbf{Y}_{1,[n_c+1:]}^n$.} of $\mathbf{Y}_1^n$, it follows that $H(\mathbf{Y}_{1,p}^n|\mathbf{Y}_{2}^n,\mathbf{Y}_{1,c}^n)=0$ for $n_E\geq n_2$.

We now extend the strategy of \cite{XieUlukusOneHop}, of bounding a single signal part, to asymmetrical channel gains
\begin{IEEEeqnarray*}{rCl}
nR_1 &=&H(W_1)\\
&\leq & I(W_1;\mathbf{Y}_1^n)-n \epsilon_3\\
&\leq & I(\mathbf{X}_1^n;\mathbf{Y}_1^n)-n \epsilon_3\\
&=& I(\mathbf{X}_1^n;\mathbf{Y}_{1,c}^n)+I(\mathbf{X}_1^n;\mathbf{Y}_{1,p}^n|\mathbf{Y}_{1,c}^n)-n \epsilon_3\\
&=& H(\mathbf{Y}_{1,c}^n)-H(\mathbf{Y}_{1,c}^n|\mathbf{X}_1^n)+I(\mathbf{X}_1^n;\mathbf{Y}_{1,p}^n|\mathbf{Y}_{1,c}^n)-n \epsilon_3\\
&=&H(\mathbf{Y}_{1,c}^n)-H(\mathbf{X}_{2,c}^n)+I(\mathbf{X}_1^n;\mathbf{Y}_{1,p}^n|\mathbf{Y}_{1,c}^n)-n \epsilon_3
\label{UB:EQ3}
\end{IEEEeqnarray*}
and it therefore holds that
\begin{equation}
H(\mathbf{X}_{2,c}^n) \leq H(\mathbf{Y}_{1,c}^n)+I(\mathbf{X}_1^n;\mathbf{Y}_{1,p}^n|\mathbf{Y}_{1,c}^n)-n(R_1+\epsilon_3).
\label{UB:EQ4}
\end{equation}

The same can be shown for $H(\mathbf{X}_{1,c}^n)$, where it holds that
\begin{equation}
H(\mathbf{X}_{1,c}^n) \leq H(\mathbf{Y}_{1,c}^n)+I(\mathbf{X}_2^n;\mathbf{Y}_{1,p}^n|\mathbf{Y}_{1,c}^n)-n(R_2+\epsilon_4).
\label{UB:EQ5}
\end{equation}
Moreover, we have that
%\begin{equation}
%H(\hat{\mathbf{X}}_1^n) \leq H(\mathbf{X}_{1,c}^n)+n(n_i-n_1)^+,
%\end{equation}
%and we have
\begin{IEEEeqnarray*}{rCl}
&&I(\mathbf{X}_1^n;\mathbf{Y}_{1,p}^n|\mathbf{Y}_{1,c}^n)+I(\mathbf{X}_2^n;\mathbf{Y}_{1,p}^n|\mathbf{Y}_{1,c}^n)\\
&=&\:2H(\mathbf{Y}_{1,p}^n|\mathbf{Y}_{1,c}^n)-H(\mathbf{Y}_{1,p}^n|\mathbf{Y}_{1,c}^n,\mathbf{X}_1^n)-H(\mathbf{Y}_{1,p}^n|\mathbf{Y}_{1,c}^n,\mathbf{X}_2^n)\\
&=& 2H(\mathbf{Y}_{1,p}^n|\mathbf{Y}_{1,c}^n)-\:H(\mathbf{X}_{2,p}^n|\mathbf{Y}_{1,c}^n)-H(\mathbf{X}_{1,p}^n|\mathbf{Y}_{1,c}^n)\\
&= &\:H(\mathbf{Y}_{1,p}^n|\mathbf{Y}_{1,c}^n)\IEEEyesnumber.
\label{UB:EQ10}
\end{IEEEeqnarray*}
The key idea for the various cases is now to bound the term $H(\mathbf{Y}_{1,c}^n|\mathbf{Y}_{2}^n)$, or equivalently $H(\mathbf{Y}_{1}^n|\mathbf{Y}_{2}^n)$ for $n_E>n_2$, in an appropriate way, to be able to use \eqref{UB:EQ4} and \eqref{UB:EQ5} on \eqref{UB:EQ1}. 
We start with the first case:
\subsubsection{Case 1 ($n_2\geq n_E$)}
Here we have a none vanishing private part, due to the definition of $\mathbf{Y}_{1,c}^n$ and therefore need to bound the term $H(\mathbf{Y}_{1,c}^n|\mathbf{Y}_{2}^n)$. Note that due to the definition of $\mathbf{Y}_{1,c}^n$ we have that $H(\mathbf{X}_{2,c}^n)=H(\hat{\mathbf{X}}_{2}^n)$. We look into the first term of equation \eqref{UB:EQ1} and show that
\begin{IEEEeqnarray*}{rCl}
H(\mathbf{Y}_{1,c}^n|\mathbf{Y}_{2}^n) &=& H(\mathbf{Y}_{1,c}^n,\mathbf{Y}_{2}^n)-H(\mathbf{Y}_{2}^n)\\
&\leq & H(\mathbf{Y}_{1,c}^n,\hat{\mathbf{X}}_1^n,\hat{\mathbf{X}}_2^n)-H(\mathbf{Y}_{2}^n)\\
&=& H(\hat{\mathbf{X}}_1^n,\hat{\mathbf{X}}_2^n)+H(\mathbf{Y}_{1,c}^n|\hat{\mathbf{X}}_1^n,\hat{\mathbf{X}}_2^n)-H(\mathbf{Y}_{2}^n)\\
& \leq &H(\hat{\mathbf{X}}_1^n)+H(\hat{\mathbf{X}}_2^n)-H(\mathbf{Y}_{2}^n|\hat{\mathbf{X}}_2^n)\\
&&+\:H(\mathbf{Y}_{1,c}^n|\hat{\mathbf{X}}_1^n,\hat{\mathbf{X}}_2^n)\\
&=& H(\hat{\mathbf{X}}_2^n)+H(\mathbf{Y}_{1,c}^n|\hat{\mathbf{X}}_1^n,\hat{\mathbf{X}}_2^n).\IEEEyesnumber
\label{UB:EQ2}
\end{IEEEeqnarray*}
Observe that the second term of equation \eqref{UB:EQ2} is depended on the specific regime. We can bound this term by
\begin{equation}
H(\mathbf{Y}_{1,c}^n|\hat{\mathbf{X}}_1^n,\hat{\mathbf{X}}_2^n)\leq n(n_c-n_E)= nn_\Delta.
\end{equation}

%\begin{equation}
%H(\mathbf{Y}_{1,c}^n|\hat{\mathbf{X}}_1^n,\hat{\mathbf{X}}_2^n)\leq\begin{cases}
%n(n_1-n_i) & \text{for } n_1>n_i 
%\\
%0 & \text{for } n_1\leq n_i,\\
%\end{cases}
%\end{equation}
%which we denote by 
%\begin{equation}
%H(\mathbf{Y}_{1,c}^n|\hat{\mathbf{X}}_1^n,\hat{\mathbf{X}}_2^n)\leq n(n_1-n_i)^+.
%\end{equation}
Note that the choice of $\hat{\mathbf{X}}_2^n$ in \eqref{UB:EQ2} as remaining signal part was arbitrary due to our assumption that both signals $\hat{\mathbf{X}}_1^n$ and $\hat{\mathbf{X}}_2^n$ have the same signal strength. Moreover, it follows on the same lines that
\begin{equation}
H(\mathbf{Y}_{1,c}^n|\mathbf{Y}_{2}^n)\leq H(\hat{\mathbf{X}}_1^n)+nn_\Delta.
\label{UB:EQ6}
\end{equation}
Looking at this result, its intuitive that one can also show the stronger result
\begin{equation}
H(\mathbf{Y}_{1,c}^n|\mathbf{Y}_{2}^n)\leq H(\mathbf{X}_{1,c}^n)
\label{UB:EQ7}
\end{equation}
for the case that $n_2 \geq n_E$.
This can be shown by considering a similar strategy as in \eqref{UB:EQ2} 
\begin{IEEEeqnarray*}{rCl}
H(\mathbf{Y}_{1,c}^n|\mathbf{Y}_{2}^n) &=& H(\mathbf{Y}_{1,c}^n,\mathbf{Y}_{2}^n)-H(\mathbf{Y}_{2}^n)\\
&\leq & H(\mathbf{Y}_{2}^n,\mathbf{X}_{1,c}^n,\mathbf{X}_{2,c}^n)-H(\mathbf{Y}_{2}^n)\\
&=& H(\mathbf{X}_{1,c}^n,\mathbf{X}_{2,c}^n)+H(\mathbf{Y}_{2}^n|\mathbf{X}_{1,c}^n,\mathbf{X}_{2,c}^n)-H(\mathbf{Y}_{2}^n)\\
& \leq &H(\mathbf{X}_{1,c}^n)+H(\mathbf{X}_{2,c}^n)-H(\mathbf{Y}_{2}^n|\mathbf{X}_{1,c}^n)\\
&&+\:H(\mathbf{Y}_{2}^n|\mathbf{X}_{1,c}^n,\mathbf{X}_{2,c}^n)-H(\mathbf{Y}_{2}^n)\\
&=& H(\mathbf{X}_{1,c}^n)+H(\mathbf{Y}_{2}^n|\mathbf{X}_{1,c}^n,\mathbf{X}_{2,c}^n),\IEEEyesnumber
\label{UB:EQ8}
\end{IEEEeqnarray*} where 
\begin{equation}
H(\mathbf{Y}_{2}^n|\mathbf{X}_{1,c}^n,\mathbf{X}_{2,c}^n)\leq (n_E-n_2)^+=0.
\label{UB:EQ9}
\end{equation}

We combine one sum-rate inequality \eqref{UB:EQ1} with \eqref{UB:EQ2} and one with \eqref{UB:EQ8}.
Moreover, we plug \eqref{UB:EQ4} and \eqref{UB:EQ5} into the corresponding bound, which yields

\begin{IEEEeqnarray*}{rCl}
n(2R_1+R_2 -\epsilon_6) &\leq & H(\mathbf{Y}_{1,c}^n)+I(\mathbf{X}_2^n;\mathbf{Y}_{1,p}^n|\mathbf{Y}_{1,c}^n)\\
&&+\:H(\mathbf{Y}_{1,p}^n|\mathbf{Y}_{2}^n,\mathbf{Y}_{1,c}^n)
\end{IEEEeqnarray*} and
\begin{IEEEeqnarray*}{rCl}
n(R_1+2R_2 -\epsilon_7) &\leq & H(\mathbf{Y}_{1,c}^n)+I(\mathbf{X}_1^n;\mathbf{Y}_{1,p}^n|\mathbf{Y}_{1,c}^n)\\
&&+\:H(\mathbf{Y}_{1,p}^n|\mathbf{Y}_{2}^n,\mathbf{Y}_{1,c}^n)+nn_\Delta.
\end{IEEEeqnarray*}
A summation of these results gives
 \begin{IEEEeqnarray*}{rCl}
3n(R_1+R_2) -n\epsilon_8 &\leq & 2H(\mathbf{Y}_{1,c}^n)+I(\mathbf{X}_1^n;\mathbf{Y}_{1,p}^n|\mathbf{Y}_{1,c}^n)\\
&&+\:I(\mathbf{X}_2^n;\mathbf{Y}_{1,p}^n|\mathbf{Y}_{1,c}^n)+nn_\Delta\\
&&+\:2H(\mathbf{Y}_{1,p}^n|\mathbf{Y}_{2}^n,\mathbf{Y}_{1,c}^n).
\end{IEEEeqnarray*}
Using \eqref{UB:EQ10}, and the fact that $H(\mathbf{Y}_{1,p}^n|\mathbf{Y}_{2}^n,\mathbf{Y}_{1,c}^n)\leq nn_p$ and $H(\mathbf{Y}_{1,p}^n|\mathbf{Y}_{1,c}^n)\leq nn_p$ results in 
\begin{IEEEeqnarray*}{rCl}
3n(R_1+R_2) -n\epsilon_8 &\leq & 2H(\mathbf{Y}_{1,c}^n)+3nn_p+nn_\Delta\\
&\leq & 2nn_c+3nn_p+nn_\Delta.
\end{IEEEeqnarray*}
Dividing by $3n$ and letting $n\rightarrow\infty$ shows the result.

\subsubsection{Case 2 ($n_E>n_2$)}
First, we assume that $n_E \geq n_1$, and include a short proof for $n_1>n_E\geq n_2$ at the end of this subsection. For Case 2, the private part $\mathbf{Y}_{1,p}$ is zero, due to the definition of the private part and $n_E>n_2$. It follows that \eqref{UB:EQ1} is
\begin{equation}
n(R_1+R_2)\leq H(\mathbf{Y}_1^n|\mathbf{Y}_2^n).
\label{UB:EQ11}
\end{equation}
Moreover, $H(\mathbf{X}_2^n)=H(\mathbf{X}_{2,c}^n)\leq H(\hat{\mathbf{X}}_2^n)$, which is why we need to bound \eqref{UB:EQ11} by $H(\mathbf{X}_2^n)$ and $H(\mathbf{X}_1^n)$.
We therefore modify \eqref{UB:EQ8} to fit our purpose in the following way
\begin{IEEEeqnarray*}{rCl}
H(\mathbf{Y}_{1}^n|\mathbf{Y}_{2}^n) &=& H(\mathbf{Y}_{1}^n,\mathbf{Y}_{2}^n)-H(\mathbf{Y}_{2}^n)\\
&\leq & H(\mathbf{X}_{1}^n,\mathbf{X}_{2}^n)+H(\mathbf{Y}_{2}^n|\mathbf{X}_{1}^n,\mathbf{X}_{2}^n)-H(\mathbf{Y}_{2}^n)\\
&=& H(\mathbf{X}_{1}^n,\mathbf{X}_{2}^n)+H(\mathbf{Y}_{2,c}^n|\mathbf{X}_{1}^n,\mathbf{X}_{2}^n)\\
&&+\:H(\mathbf{Y}_{2,p}^n|\mathbf{X}_{1}^n,\mathbf{X}_{2}^n,\mathbf{Y}_{2,c}^n)\\
&&-\:H(\mathbf{Y}_{2,c}^n)-H(\mathbf{Y}_{2,p}^n|\mathbf{Y}_{2,c}^n)\\
&\leq &H(\mathbf{X}_{1}^n,\mathbf{X}_{2}^n)+H(\mathbf{Y}_{2,c}^n|\mathbf{X}_{1}^n,\mathbf{X}_{2}^n)-H(\mathbf{Y}_{2,c}^n),
\label{UB:EQ12}
\end{IEEEeqnarray*}
where $\mathbf{Y}_{2,c}^n=\mathbf{Y}_{2,[:n_1]}^n$ and $\mathbf{Y}_{2,p}^n=\mathbf{Y}_{2,[n_1+1:]}^n$.
Now, we can show that
\begin{IEEEeqnarray*}{rCl}
H(\mathbf{Y}_{1}^n|\mathbf{Y}_{2}^n)& \leq &H(\mathbf{X}_{1}^n,\mathbf{X}_{2}^n)+H(\mathbf{Y}_{2,c}^n|\mathbf{X}_{1}^n,\mathbf{X}_{2}^n)-H(\mathbf{Y}_{2,c}^n)\\
&=&H(\mathbf{X}_{1}^n,\mathbf{X}_{2}^n)+H(\hat{\mathbf{X}}_{2}^n|\mathbf{X}_{2}^n)-H(\mathbf{Y}_{2,c}^n)\\
&\leq &H(\mathbf{X}_{1}^n)+H(\mathbf{X}_{2}^n)-H(\mathbf{Y}_{2,c}^n|\mathbf{X}_{2}^n)\\
&&+\:H(\hat{\mathbf{X}}_{2}^n|\mathbf{X}_{2}^n)\\
&=& H(\mathbf{X}_{2}^n)+H(\hat{\mathbf{X}}_{2}^n|\mathbf{X}_{2}^n)\\
&\leq &H(\mathbf{X}_{2}^n)+ nn_\Delta.\IEEEyesnumber
\label{UB:EQ13}
\end{IEEEeqnarray*}
Bounding $H(\mathbf{Y}_{1}^n|\mathbf{Y}_{2}^n)$ by $H(\mathbf{X}_{1}^n)$ requires more work. We have a redundancy in the negative entropy terms, with which we can cancel the $H(\hat{\mathbf{X}}_{2}^n|\mathbf{X}_{2}^n)$ term in the following way
\begin{IEEEeqnarray*}{rCl}
H(\mathbf{Y}_{1}^n|\mathbf{Y}_{2}^n)& \leq &H(\mathbf{X}_{1}^n,\mathbf{X}_{2}^n)+H(\hat{\mathbf{X}}_{2}^n|\mathbf{X}_{2}^n)-H(\mathbf{Y}_{2,c}^n)\\
&\leq &H(\mathbf{X}_{1}^n)+H(\mathbf{X}_{2}^n)-H(\mathbf{Y}_{2,c,[:n_2]}^n|\mathbf{X}_{1}^n)\\
&&+\:H(\hat{\mathbf{X}}_{2}^n|\mathbf{X}_{2}^n)-H(\mathbf{Y}_{2,c,[n_2+1:]}^n|\mathbf{X}_{1}^n,\mathbf{Y}_{2,c,[:n_2]}^n)\\
&=& H(\mathbf{X}_{1}^n)-H(\mathbf{Y}_{2,c,[n_2+1:]}^n|\mathbf{X}_{1}^n,\mathbf{Y}_{2,c,[:n_2]}^n)\\
&&+\:H(\hat{\mathbf{X}}_{2}^n|\mathbf{X}_{2}^n)\\
&\leq &H(\mathbf{X}_{1}^n)-H(\mathbf{Y}_{2,c,[n_2+1:]}^n|\mathbf{X}_{1}^n,\mathbf{Y}_{2,c,[:n_2]}^n,\mathbf{X}_{2}^n)\\
&&+\:H(\hat{\mathbf{X}}_{2}^n|\mathbf{X}_{2}^n)\\
&=&H(\mathbf{X}_{1}^n)-H(\mathbf{Y}_{2,c,[n_2+1:]}^n|\mathbf{X}_{1}^n,\mathbf{X}_{2}^n)\\
&&+\:H(\hat{\mathbf{X}}_{2}^n|\mathbf{X}_{2}^n)\\
&=& H(\mathbf{X}_{1}^n)-H(\hat{\mathbf{X}}_{2,c,[n_2+1:]}^n|\mathbf{X}_{2}^n)+H(\hat{\mathbf{X}}_{2}^n|\mathbf{X}_{2}^n)\\
&= &H(\mathbf{X}_{1}^n).\IEEEyesnumber
\label{UB:EQ14}
\end{IEEEeqnarray*}
Now we can bound one \eqref{UB:EQ11} with \eqref{UB:EQ13} and one with \eqref{UB:EQ14}. Moreover, we use \eqref{UB:EQ4} and \eqref{UB:EQ5} on the result. Note that due to our regime, \eqref{UB:EQ4} becomes
\begin{equation}
H(\mathbf{X}_{2}^n) \leq H(\mathbf{Y}_{1}^n)-n(R_1+\epsilon_3),
\end{equation}
while \eqref{UB:EQ5} becomes
\begin{equation}
H(\mathbf{X}_{1}^n) \leq H(\mathbf{Y}_{1}^n)-n(R_2+\epsilon_4).
\end{equation}
Putting everything together results in
 \begin{IEEEeqnarray*}{rCl}
3n(R_1+R_2)-n\epsilon_8 &\leq & 2H(\mathbf{Y}_{1}^n)+nn_\Delta\\
&\leq & 2nn_c+nn_\Delta.
\end{IEEEeqnarray*}
Dividing by $3n$ and letting $n\rightarrow\infty$ shows the result.

We need to modify a bound on $H(\mathbf{Y}_{1}^n|\mathbf{Y}_{2}^n)$, if the signal strength $n_E$ lies in between $n_1$ and $n_2$. In \eqref{UB:EQ13}, we see that 
\begin{equation}
H(\mathbf{X}_{1}^n)-H(\mathbf{Y}_{2,c}^n|\mathbf{X}_{2}^n)\leq n(n_1-n_E)^+.
\end{equation}
Moreover, we have that $H(\hat{\mathbf{X}}_{2}^n|\mathbf{X}_{2}^n)\leq n(n_E-n_2)^+$. Both changes cancel and we get the same result as \eqref{UB:EQ13}. The result follows on the same lines as in the previous derivation.
\end{proof}

\section{Conclusion}
\begin{figure}
\includegraphics[scale=0.3]{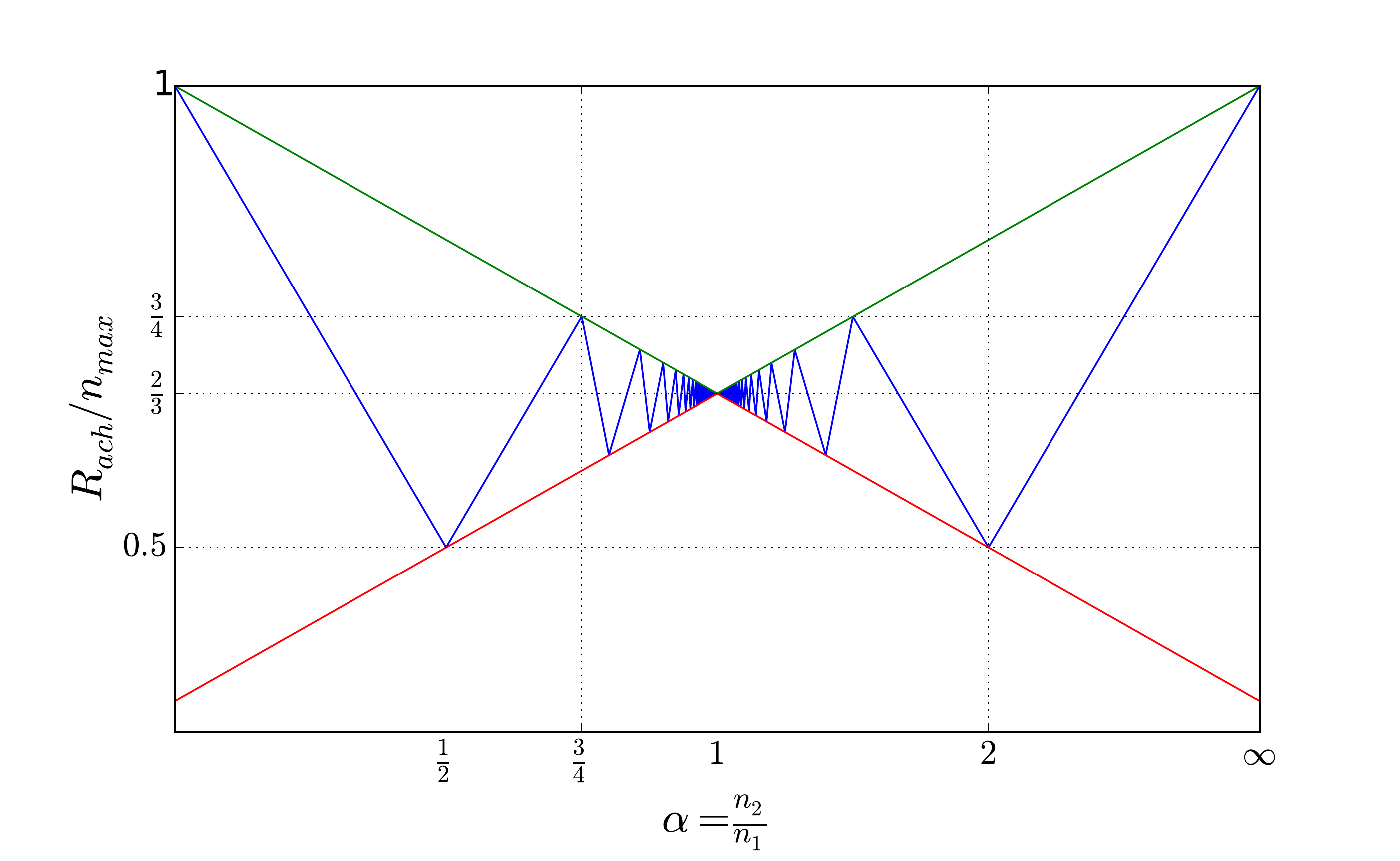}
\caption{The graphic shows the achievable secrecy rate, normalized by the strongest signal $n_{\text{max}}=\max\{n_1,n_2\}$, in blue. Moreover, it shows the upper bound $r_{UB}$ in green. The regime is $n_E>\min\{n_1,n_2\}$, which results in the private rate being zero. The x-axis visualizes the fraction $\alpha=\tfrac{n_2}{n_1}$. At $\alpha=0$, there is no penalty in utilizing $\mathbf{X}_2$ for jamming, and $n_{\text{max}}$ bits can be achieved. For $\alpha\rightarrow 1$, the difference between $\mathbf{X_1}$ and $\mathbf{X}_2$ gets smaller and the rate approaches the s.d.o.f of $\tfrac{2}{3}n_{\text{max}}$. For $1<\alpha\leq \infty$, $\mathbf{X}'_2$ gets stronger than $\mathbf{X}'_1$. Interchanging the roles of both users in the scheme results in a mirror-symmetric behaviour. We also show a red curve for $\tfrac{2}{3}\tfrac{n_1+n_2}{2}$.}
\label{Fig3}
\end{figure}

\subsection{Discussion of the Results}
We have approximated the Gaussian multiple access wiretap channel with the linear deterministic model. This enables a simplified look at the model with an emphasis on the role of channels gains in the secrecy sum-rate analysis. We used a signal-scale alignment approach to provide information-theoretic security. Our approach distinguishes between channel gain differences which is reflected in the achievable sum-rate. Our results agree with previous s.d.o.f. results, as our secrecy sum-rate approaches the s.d.o.f. asymptotically for $n_\Delta \rightarrow 0$ and $n_E\geq \min\{n_2,n_1\}$, i.e. without channel gain difference between users and without private part, see Fig.~\ref{Fig3}. Moreover, we have shown upper bounds which are tight for certain $n_\Delta$ ranges. We note that the achievable sum-rate varies between being above and below the $\tfrac{2}{3}$ threshold. An interesting question is, to which channel gain the $\tfrac{2}{3}$ d.o.f. refer to. The best d.o.f. representation in our scheme is $\tfrac{2}{3}$ of $\frac{n_1+n_2}{2}$, which is represented in the red curve in Fig.~\ref{Fig3}.
%\begin{figure}
%\includegraphics[scale=0.4]{figs/Fig4.pdf}
%\caption{The graphic shows the achievable secrecy rate, normalized by the strongest signal $n_1$, in blue. Moreover, it shows the upper bound $r_{UB}$ in green. The regime is $n_E>n_2$, which results in the private rate being zero. The x-axis visualizes the difference between both users $n_\Delta$. It can be seen that that the scheme and the bounds approach the s.d.o.f.}
%\label{Fig4}
%\end{figure}
\subsection{Future Research Directions}

We think that the results of this paper are important for a constant-gap sum-capacity analysis of the Gaussian wiretap channel. The achievable scheme can be translated to the Gaussian equivalent by using layered lattices codes. The signals have to be partitioned into $n_\Delta$ intervals, where every partition represents a single-link using one lattice code. Moreover, our upper bounds guide the design of smart genies for the Gaussian channel. We expect that the techniques of \cite{XieUlukusOneHop} together with smart genies, which partition the terms accordingly, will result in bounds which lead to a constant-gap result for the Gaussian multiple access wiretap channel. 

\bibliographystyle{./IEEEtran}
\bibliography{./ref}

\end{document}